\let\origfootnote\footnote
\renewcommand{\footnote}[1]{\kern.06em\origfootnote{#1}}
\newcommand{\pfootnote}[1]{\kern-.06em\origfootnote{#1}}
\begin{document}
\title{Retargeting an Abstract Interpreter for a New Language by Partial Evaluation}

\author{Jay Lee}
\orcid{0000-0002-2224-4861}
\email{jhlee@ropas.snu.ac.kr}

\affiliation{%
  \department{Department of Computer Science and Engineering}
  \institution{Seoul National University}
  \city{Seoul}
  \country{Korea}}


\maketitle

\section{Problem and Motivation}\label{sec:prob}
It is well-known that abstract interpreters \citep{CouCou77Abstract} can be systematically derived from their concrete counterparts using a ``recipe,'' \citep{RivYi20Introduction} but developing sound static analyzers remains a time-consuming task.
Reducing the effort required and mechanizing the process of developing analyzers continues to be a significant challenge.
Is it possible to automatically retarget an existing abstract interpreter for a new language?

We propose a novel technique to automatically derive abstract interpreters for various languages from an existing abstract interpreter.
By leveraging partial evaluation \citep{Fut71Partial,Fut99Partial}, we specialize an abstract interpreter for a source language.
The specialization is performed using the semantics of target languages written in the source language \citep{Rey72Definitional}.
Our approach eliminates the need to develop analyzers for new targets from scratch.
We show that our method can effectively retarget an abstract interpreter for one language into a correct analyzer for another language.

\section{Background and Related Work}\label{sec:back}
\paragraph{Partial Evaluation}
Partial evaluation specializes an interpreter with respect to a program, effectively compiling it---an observation known as the first \citet{Fut71Partial,Fut99Partial} projection.
The result is a specialized interpreter that only accepts the input for the program it was specialized for.
Interpreting a program with different inputs repeatedly incurs the overhead of traversing the program's AST; Specializing the interpreter eliminates this overhead \citep{Jon96Introduction}.

Staged abstract interpreters \citep{WeiCheRom19Staged} extend the first Futamura projection to monadic abstract interpreters in the style of Abstracting Definitional Interpreters \citep{DarLabNguVan17Abstracting}.
This approach demonstrates performance improvements when analyzing programs by specializing the abstract interpreter for each target.

Our approach differs fundamentally, as we specialize an abstract interpreter not for an analysis target, but for a \emph{definitional interpreter} for a new language.

\paragraph{Meta-level Analysis}
\textsf{JSAVER} \citep{ParAnRyu22Automatically} is a meta-level static analyzer that analyzes JavaScript\,(JS) programs according to the rapidly evolving ECMA-262 \citep{ECM15ECMAScript} specification.
It analyzes JS programs by performing meta-level analysis on a definitional interpreter for JS written in a lower-level language called~\IRES.
In this specialized setting, JS and \IRES{} share the same base values, and the analysis operates on the AST of the JS interpreter.

Our approach offers a promising extension of this in two ways:
\begin{enumerate}
  \item Our method is language-agnostic, allowing source and target languages to have distinct value domains.
  \item We address the performance limitations identified in their work through partial evaluation.
\end{enumerate}

\paragraph{Reusing Abstractions}
Our motivation aligns with the broader concept of reusing abstractions in static analysis.
Mechanized frameworks such as skeletal semantics \citep{BodGarJenSch19Skeletal} have been proposed to derive abstract interpreters by integrating meta-language abstractions with language-specific ones \citep{JenRebSch23Deriving}.
Another approach by \citet{KeiErd19Sound} employs arrows meta-language \citep{Hug00Generalising} to compose modular, reusable analysis components.
\textsc{Mopsa} \citep{JouMinMonOua20Combinations}, an open-source static analysis framework, leverages OCaml's extensible variants to easily define new targets and reuse abstractions compositionally.

Unlike these bottom-up approaches that compose analysis components, our method reuses an existing abstract interpreter in its entirety to derive a new one.

\section{Approach}\label{sec:approach}
\begin{nota}
  We denote the meta-language, the source language, and the target language as~\MET,~\SRC, and~\TGT, respectively.
  Programs and values in these languages are typeset differently, e.g.,~$\mathM{e}$,~$\mathS{e}$, and~$\mathT{e}$.
  When it is helpful, we subscript the language name to distinguish the objects, e.g.,~$\Dom ME$, $\Dom SE$, and~$\Dom TE$.
  Given domains~$\mathbb{D}_1$ and~$\mathbb{D}_2$, we denote the product as~$\mathbb{D}_1 \times \mathbb{D}_2$, disjoint union as~$\mathbb{D}_1 + \mathbb{D}_2$, and continuous function domain as~$\mathbb{D}_1 \domto \mathbb{D}_2$, where~$\times$ has the highest precedence, $+$ has lower precedence, and~$\domto$ the lowest; $\times$ and~$+$ are left-associative, whereas~$\domto$ is right-associative~\citep{Rey98Theories}.
  We denote the set of partial functions from~$X$ to~$Y$ as~$X \parto Y$.
\end{nota}

Our goal is
\begin{description}
  \item[given] a concrete \TGT*interpreter $\Intp ST$ in~\SRC\,(\cref{subsec:int}), and
  \item[given] an abstract \SRC*interpreter $\Intp MS<\ABS>$ in~\MET\,(\cref{subsec:aint}), to
  \item[derive] an abstract \TGT*interpreter $\Intp MT<\ABS>$ in~\MET\,(\cref{subsec:correctness}).
\end{description}

We want the derivation to be \emph{fully automatic,} hence avoiding the need to develop an abstract interpreter for~\TGT{} from scratch.
As we shall see in \cref{subsec:peval}, we achieve this by partial evaluation~$\PE$ in the \MET*language.
In \cref{subsec:correctness}, we show that the retargeted abstract interpreter is indeed sound:
\OmitEnc
\OmitDec
\begin{restatable*}[Correctness of a Retargeted Abstract Interpreter]{thm}{correctness}\label{thm:correctness}
  Retargeting an abstract interpreter~$\Intp*MS<\ABS>$ with respect to an interpreter~$\Intp ST$ results in a sound abstract interpreter~$\Intp*MT<\ABS> \triangleq \Pe[\big]{\Intp MS<\ABS>}{\Enc[\big]SM{\Intp ST}}$.
\end{restatable*}
\noindent We finish by presenting a mini-example in \cref{subsec:example} to illustrate our approach.

\IncludeEnc
\IncludeDec
We now define the ingredients of our recipe.
\begin{defn}[Syntax and Semantics]\label{def:synsem}
  Let~$\Dom ME$,~$\Dom SE$, and~$\Dom TE$ be the sets of expressions, where~$\Dom ME$ and~$\Dom TE$ contain tuple constructors
  \begin{equation*}
    \Dom ME \ni \mathM{e} \Coloneqq \mathM{(\mathM{e, e})} \mathbin| \dotsb \quad \text{and} \quad
    \Dom SE \ni \mathS{e} \Coloneqq \mathS{(\mathS{e, e})} \mathbin| \dotsb
  \end{equation*}
  where $\dotsb$ denotes the rest of the syntax, which is parameterizable.

  Let~$\Dom MV$,~$\Dom SV$, and~$\Dom TV$ be the domains of values for languages~\MET, \SRC, and~\TGT, respectively.
  Then we have concrete semantics functions
  \begin{equation*}
    \Sem M{e} \in \Dommath M{\Sigma} \domto \Dom MV, \qquad
    \Sem S{e} \in \Dommath S{\Sigma} \domto \Dom SV,\text{ and } \quad
    \Sem T{e} \in \Dommath T{\Sigma} \domto \Dom TV
  \end{equation*}
  for all programs~$\mathM{e} \in \Dom ME$, $\mathS{e} \in \Dom SE$, and~$\mathT{e} \in \Dom TE$, where $\Dommath M{\Sigma}$, $\Dommath S{\Sigma}$, and $\Dommath T{\Sigma}$ are (parameterizable) environments.
  For simplicity in our presentation, we assume that environments are single-valued, i.e.,~$\Dommath M{\Sigma} = \Dom MV$,~$\Dommath S{\Sigma} = \Dom SV$, and~$\Dommath T{\Sigma} = \Dom TV$.
  These essentially represent the input values of the programs.
  Domains~$\Dom MV$ and~$\Dom SV$ should be closed under the product operation, i.e., $\Dom MV \times \Dom MV \subseteq \Dom MV$ and $\Dom SV \times \Dom SV \subseteq \Dom SV$.
  Angle brackets denote (semantic) tuples, e.g., $\mathM{\Tup{v_1, v_2}}$ and $\mathS{\Tup{v_1, v_2}}$.
\end{defn}
An additional requirement for each domain of defining-language values is that it should be able to encode the programs and the values of the defined-language, and we express this in \cref{def:encdec}.

\subsection{An Interpreter for the New Language}\label{subsec:int}
The first ingredient is the semantics of the new language~\TGT{}: a definitional interpreter~$\Intp ST$.
This is written in the source language~\SRC{} for which we have an abstract interpreter~$\Intp*MS<\ABS>$.

For an interpreter to accept a program of another target language, we need to embed programs and values of the target language into the source language.
\begin{defn}[Encoders and Decoders]\label{def:encdec}
  An encoder~$\Enc TS{} \in \Dom TE + \Dom TV \domto \Dom SV$ is a function that encodes a \TGT*program or a \TGT*value into an \SRC*expression.
  This induces a decoder~$\Dec ST{} \in \Dom SV \parto \Dom TE + \Dom TV$, which is a partial function, where
  \begin{equation*}
    \Dec[\big]ST{\Enc TS{e}} = \mathT{e} \quad \text{and} \quad \Dec[\big]ST{\Enc TS{v}} = \mathT{v}
  \end{equation*}
  for all $\mathT{e} \in \Dom TE$ and $\mathT{v} \in \Dom TV$.
\end{defn}
In practice, encoders and decoders are usually part of an interpreter as parsers, and we can assume that they are available.
For brevity, we shall omit the encoders and decoders as they can be unambiguously inferred from the surrounding context.

\OmitEnc
\OmitDec

\begin{defn}[Concrete Interpreter]\label{def:int}
  A concrete \TGT*interpreter~$\Intp ST \in \Dom SE$ is an \SRC*program that accepts an \SRC*tuple that encodes a \TGT*program and a \TGT*value, and returns an \SRC*value that encodes the evaluation result.
  That is, $\Intp ST$ satisfies
  \begin{equation}\label{eq:int}
    \Sem[\big]S{\Intp ST} \mathS{\Tup{\Enc TS{e}, \Enc TS{i}}} = \Enc[\big]TS{\Sem T{e}\: i}
  \end{equation}
  for all $\mathT{e} \in \Dom TE$ and $\mathT{i} \in \Dom TV$.
\end{defn}

Note that \cref{eq:int} enforces a few requirements on the expressiveness of the domains.
\SRC*language should to be able to construct and represent a tuple, as well as to encode the programs and values of \TGT.
And of course, \SRC*language should be able to write an interpreter for \TGT.
These explain the requirements in \cref{def:synsem}.

\subsection{An Existing Abstract Interpreter}\label{subsec:aint}
Now we need the second ingredient: an existing abstract interpreter for~\SRC,~$\Intp*MS<\ABS>$.
This is written in the meta-language~\MET{} for which we have a partial evaluator~$\PE$.

We define the abstract domain and the concretization function for the source language~\SRC{} in which the abstract interpreter~$\Intp*MS<\ABS>$ operates.
\begin{defn}[Abstract Domain and Concretization]\label{def:conc}
  The abstract domain~$\Dom*SV<\ABS>$ of~\SRC{} approximates the concrete domain~$\Dom SV$ with an approximation partial order~$\leS$.
  An abstract value can be concretized to a set of concrete values using a monotonic concretization function~$\Conc S \in \Dom*SV<\ABS> \domto \Pow{\Dom SV}$ where for any $\mathS{v_1^\ABS}, \mathS{v_1^\ABS} \in \Dom*SV<\ABS>$, we have \[\mathS{v_1^\ABS} \leS \mathS{v_2^\ABS} \implies \Conc S \: \mathS{v_1^\ABS} \subseteq \Conc S \: \mathS{v_2^\ABS}. \qedhere\]
\end{defn}

With \cref{def:conc}, we can define a correctness condition for an abstract \SRC*interpreter~$\Intp MS<\ABS>$.
\begin{defn}[Abstract Interpreter]\label{def:aint}
  An abstract \SRC*interpreter~$\Intp*MS<\ABS> \in \Dom ME$ is an \MET*program that accepts an \SRC*program and an \SRC*value, and returns a sound approximation of the evaluation result with respect to the concretization~$\Conc S \in \Dom*SV<\ABS> \domto \Pow{\Dom SV}$, as defined in \cref{def:conc}.
  That is, $\Intp*MS<\ABS>$~satisfies
  \begin{equation}\label{eq:aint}
    \Enc[\big]SM{\Sem S{e}\: \mathS{i}} \in \Conc S \Comp \Sem[\big]M{\Intp MS<\ABS>} \: \Tup{\Enc SM{e}, \Enc SM{i}}
  \end{equation}
  for all~$\mathS{e} \in \Dom SE$ and~$\mathS{i} \in \Dom SV$.
\end{defn}

Again, \cref{eq:aint} imposes requirements on the expressiveness of \MET.
The meta-language~\MET{} should be able to construct and represent a tuple, encode the programs and values of \SRC, and write an abstract interpreter for \SRC.

\subsection{Retargeting using a Partial Evaluator}\label{subsec:peval}
The final ingredient for automatically deriving a retargeted abstract interpreter is a partial evaluator for \MET: $\PE \in \Dom ME \times \Dom MV \domto \Dom ME$.
\begin{defn}[Partial Evaluator]\label{def:peval}
  A partial evaluator~$\PE \in \Dom ME \times \Dom MV \domto \Dom ME$ accepts two inputs,
  \begin{enumerate}
    \item an \MET*program~$\mathM{e} \in \Dom ME$ that accepts an \MET*tuple~$\mathM{\Tup{i_1, i_2}} \in \Dom MV$, and
    \item an \MET*value~$\mathM{i_1} \in \Dom MV$,
  \end{enumerate}
  and returns a specialized \MET*program that accepts the remaining \MET*value~$\mathM{i_2} \in \Dom MV$.
  That is, $\PE$~satisfies
  \begin{equation}\label{eq:peval}
    \Sem M{\Pe[\big]{e}{i_1}}\: \mathM{i_2} = \Sem M{e} \mathM{\Tup{i_1, i_2}}
  \end{equation}
  for all~$\mathM{e} \in \Dom ME$ and~$\mathM{i_1}, \mathM{i_2} \in \Dom MV$.
\end{defn}

From \cref{eq:peval}, we can specialize an abstract interpreter with a target program by substituting~$\Intp MS<\ABS>$ for~$\mathM{e}$:
\begin{equation}\label{eq:fut}
  \Sem[\big]M{\Pe[\big]{\Intp MS<\ABS>}{\Enc SM{e}}}\: \Enc SM{i}
  = \Sem[\big]M{\Intp MS<\ABS>}\: \Tup{\Enc SM{e}, \Enc SM{i}}.
\end{equation}
This is essentially the first Futamura projection extended to abstract interpreters.

\section{Results}\label{sec:results}
\subsection{Correctness Theorem}\label{subsec:correctness}
Before we show the \hyperref[thm:correctness]{main theorem}, we prove the correctness of a meta-level analysis.

\begin{lem}[Correctness of a Meta-level Analysis]\label{lem:meta}
  Analyzing a \TGT*program using an abstract interpreter~$\Intp MS<\ABS>$ along with the concrete interpreter~$\Intp ST$ is sound.
\end{lem}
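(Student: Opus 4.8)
The plan is to show that the composition of the abstract source interpreter $\Intp MS<\ABS>$ with the concrete target interpreter $\Intp ST$ soundly approximates the target semantics $\Sem T{e}$. Concretely, I would first need to fix what "analyzing a $\TGT$-program using $\Intp MS<\ABS>$ along with $\Intp ST$" means as a formal statement: given a target program $\mathT{e} \in \Dom TE$ and a target input $\mathT{i} \in \Dom TV$, we feed the $\SRC$-encoding of the interpreter-program pair into the abstract interpreter, and claim that the (doubly) encoded target result is contained in the concretization of the abstract output, i.e.
\begin{equation*}
  \Enc[\big]SM{\Enc[\big]TS{\Sem T{e}\: \mathT{i}}} \in \Conc S \Comp \Sem[\big]M{\Intp MS<\ABS>}\: \Tup{\Enc SM{\Intp ST}, \Enc SM{\Enc TS{\Tup{e, i}}}}.
\end{equation*}
That is, running the abstract $\SRC$-analyzer on the concrete $\TGT$-interpreter and on the encoded $\TGT$-program/input pair yields a sound over-approximation of the concrete $\TGT$-evaluation result, modulo the encoders.

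The key steps, in order. \emph{Step 1:} Instantiate the soundness condition of the abstract interpreter, \cref{eq:aint}, at the specific $\SRC$-program $\mathS{e} \mathrel{\triangleq} \Intp ST$ and the specific $\SRC$-input $\mathS{i} \mathrel{\triangleq} \Enc TS{\Tup{e, i}}$ (the $\SRC$-tuple encoding the $\TGT$-program and $\TGT$-input). This gives $\Enc SM{\Sem S{\Intp ST}\: \mathS{i}} \in \Conc S \Comp \Sem M{\Intp MS<\ABS>}\: \Tup{\Enc SM{\Intp ST}, \Enc SM{\mathS{i}}}$. \emph{Step 2:} Rewrite the left-hand side using the defining property of the concrete $\TGT$-interpreter, \cref{eq:int}: since $\mathS{i}$ is exactly the $\SRC$-tuple $\mathS{\Tup{\Enc TS{e}, \Enc TS{i}}}$, we have $\Sem S{\Intp ST}\: \mathS{i} = \Enc TS{\Sem T{e}\: \mathT{i}}$. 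Substituting yields the claimed containment. \emph{Step 3:} Discharge the bookkeeping about the two-stage encoding $\Enc SM{\cdot} \Comp \Enc TS{\cdot}$ and the tuple constructors — checking that the product-closure assumptions of \cref{def:synsem} and the availability of encoders from \cref{def:encdec} make every term well-formed, and that (per the remark after \cref{def:encdec}) the encoders/decoders may be left implicit without ambiguity.

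I expect the main obstacle to be not any deep argument but rather pinning down the \emph{precise} formal statement the informal lemma intends — in particular, making the composition of encoders explicit and consistent, since the paper deliberately suppresses $\Enc{}{}$ and $\Dec{}{}$ in the surrounding text. Once the statement is correctly phrased so that the interface types line up ($\TGT$-values encoded into $\SRC$, then $\SRC$-values encoded into $\MET$), the proof itself is a two-line chaining of \cref{eq:aint} and \cref{eq:int} with no genuine difficulty; the care is entirely in the choice of instantiation in Step 1 and in verifying that the $\SRC$-input supplied there really is the tuple that \cref{eq:int} expects. A secondary subtlety is monotonicity of $\Conc S$ (from \cref{def:conc}), which is not actually needed for this lemma as stated but would matter if one phrased the conclusion with an intermediate abstraction step; I would note this but not invoke it.
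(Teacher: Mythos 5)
Your proposal is correct and follows essentially the same route as the paper's own proof: instantiate the soundness condition \cref{eq:aint} with the concrete interpreter $\Intp ST$ as the analyzed \SRC*program and the encoded pair $\mathS{\Tup{\Enc TS{e}, \Enc TS{i}}}$ as its input, then rewrite the concrete side via \cref{eq:int}. Your Step~3 on making the composed encoders explicit is bookkeeping the paper deliberately suppresses, but it adds rather than loses rigor.
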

\begin{proof}
  Suppose we are given a \TGT*program~$\mathT{e}$ and a \TGT*value~$\mathT{i}$.
  Then by substituting $\Intp ST$ for $\mathS{e}$ and $\mathS{\Tup{\Enc TS{e}, \Enc TS{i}}}$ for $\mathS{i}$ in \cref{eq:aint}, we get
  \begin{equation*}
    \Conc S \Comp \Sem[\big]M{\Intp MS<\ABS>} \: \Tup[\big]{\Enc SM{\Intp ST}, \Enc SM{\Tup{\Enc TS{e}, \Enc TS{i}}}} \ni \Sem[\big]S{\Intp ST} \mathS{\Tup{\Enc TS{e}, \Enc TS{i}}}.
  \end{equation*}
  Then from \cref{eq:int}, $\Sem[\big]S{\Intp ST} \mathS{\Tup{\Enc TS{e}, \Enc TS{i}}}$ is precisely $\Enc TS{\Sem T{e}\: i}$.
  Thus the analysis~$\Sem[\big]M{\Intp MS<\ABS>} \: \Tup[\big]{\Enc SM{\Intp ST}, \Enc SM{\Tup{\Enc TS{e}, \Enc TS{i}}}}$ subsumes the concrete evaluation $\Enc TS{\Sem T{e}\: i}$.
\end{proof}

\correctness
\begin{proof}
  Given a partial evaluator~$\PE$ that satisfies \cref{eq:peval}, we can partially evaluate the abstract interpreter~$\Intp*MS<\ABS>$ with respect to the interpreter~$\Intp*ST$ as
  \begin{equation*}
    \Intp*MT<\ABS> \triangleq \Pe[\big]{\Intp*MS<\ABS>}{\Enc[\big]SM{\Intp*ST}}.
  \end{equation*}

  We now show that the specialized~$\Intp*MT<\ABS>$ is indeed a correct static analyzer of~\TGT.
  Fix a \TGT*program~$\mathT{e}$ and a \TGT*value~$\mathT{i}$.
  From \cref{eq:fut},
  \begin{equation*}
    \Sem[\big]M{\Intp*MT<\ABS>}\: \Enc SM{\Tup{\Enc TS{e}, \Enc TS{i}}} = \Sem[\big]M{\Pe[\big]{\Intp*MS<\ABS>}{\Enc SM{\Intp*ST}}}\: \Enc SM{\Tup{\Enc TS{e}, \Enc TS{i}}} = \Sem[\big]M{\Intp*MS<\ABS>}\: \Tup[\big]{\Enc SM{\Intp*ST}, \Enc SM{\Tup{\Enc TS{e}, \Enc TS{i}}}}
  \end{equation*}
  and from \cref{lem:meta}, this soundly approximates the evaluation
  \begin{equation*}
    \Enc TS{\Sem T{e}\: i} \in \Conc S \Comp \Sem[\big]M{\Intp*MT<\ABS>}\: \Enc SM{\Tup{\Enc TS{e}, \Enc TS{i}}}.
  \end{equation*}
  Therefore, the specialized abstract interpreter~$\Intp*MT<\ABS>$ is correct.
\end{proof}

\subsection{A Mini-Example}\label{subsec:example}
We now present a concrete mini-example that demonstrates how our approach retargets an abstract interpreter~$\Intp*MS<\ABS>$ for a source language~\SRC{} to an abstract interpreter~$\Intp*MT<\ABS>$ for a target language~\TGT.
This example illustrates how our method works in practice and highlights the properties of the resulting retargeted abstract interpreter.

Consider an ML-like functional language as the meta-language~\MET{} and loop-free mini-languages~\TGT{} and~\SRC:
\begin{center}
  \small
  \begin{bnf}
    \mathT{e} : $\Dom TE$ ::= \Tadd{n} // \Tmult{n}
    ;;
    \mathS{e} : $\Dom SE$ ::= $\mathS{x}$ // $\mathS{n}$ // \Sadd{e}{e} // \Smult{e}{e} // \Sequ{e}{e} // \Spair{e}{e} // \Sfst{e} // \Ssnd{e}
    | \Scond{e}{e}{e}
  \end{bnf}
\end{center}
\TGT{}~is a set of single instructions that either add or multiply a number to a user input, e.g., \Tadd{42} adds 42 to an input and \Tmult{42} multiplies an input by 42.
\SRC{}~is a structured language with a single variable~$\mathS{x}$ that represents the user input, e.g., \Sfst{\mathS{x}} returns the first element of a (tuple) input.

\IncludeEnc
\IncludeDec
Given the value domains $\Dom TV = \mathbb{Z}$ and $\Dom SV = \mathbb{Z} + \Dom SV \times \Dom SV$,
we have the semantics~$\Sem T{e}$ and~$\Sem S{e}$ and the encoder~$\Enc TS{\cdot}$ as follows:
\begin{gather*}
  \small
  \begin{flalign*}
    \Sem T{e}         & \in \Dom TV \domto \Dom TV     & \Enc TS{e}         & \in \Dom SV                          & \Enc TS{v} & \in \Dom SV \\
    \Sem T{\Tadd{n}}  & = \func{v}{\mathT{n} + v}      & \Enc TS{\Tadd{n}}  & = \mathS{\Tup{\Plain{0}, \mathT{n}}} & \Enc TS{n} & = \mathT{n} \\
    \Sem T{\Tmult{n}} & = \func{v}{\mathT{n} \times v} & \Enc TS{\Tmult{n}} & = \mathS{\Tup{\Plain{1}, \mathT{n}}}
  \end{flalign*} \\
  \begin{flalign*}
    \Sem S{e} & \in \Dom SV \domto \Dom SV & \Sem S{x} & = \func{v}{v} & \Sem S{n} & = \func{v}{\mathS{n}}
  \end{flalign*} \\
  \begin{flalign*}
    \Sem S{\Sadd{e_1}{e_2}} & = \func{v}{\Sem S{e_1}\: v + \Sem S{e_2}\: v} & \Sem S{\Smult{e_1}{e_2}} & = \func{v}{\Sem S{e_1}\: v \times \Sem S{e_2}\: v}                                  \\
    \Sem S{\Sequ{e_1}{e_2}} & = \func{v}{\Sem S{e_1}\: v = \Sem S{e_2}\: v} & \Sem S{\Spair{e_1}{e_2}} & = \func{v}{\mathS{\Tup{\Sem S{e_1}\: \Plain{v}, \Sem S{e_2}\: \Plain{v}}}} \\
    \Sem S{\Sfst{e}}        & = \func{v}{\pi_1 (\Sem S{e}\: v)}             & \Sem S{\Ssnd{e}}         & = \func{v}{\pi_2 (\Sem S{e}\: v)}
  \end{flalign*} \\
  \Sem S{\Scond{e_p}{e_c}{e_a}} = \func{v}{\scond{\Sem S{e_p}\: v \ne 0}{\Sem S{e_c}\: v}{\Sem S{e_a}\: v}}
\end{gather*}
Here, $\pi_1$ and $\pi_2$ are the projections of a pair, and $\scond{\cdot}{\cdot}{\cdot}$ is the semantic conditional operator.
We omit the presentation of~$\Dom ME$,~$\Dom MV$,~$\Enc SM{\cdot}$, and~$\Sem M{e}$ of~\MET{} for brevity.
The encoder~$\Enc SM{\cdot}$ is a typical parser that embeds the AST of an \SRC*program or an \SRC*value into its corresponding \MET*datatype representation.

Now we can define the concrete \TGT*interpreter
\begin{equation*}
  \Intp*ST = \Scond{\Sequ{\Sfst{\Sfst{x}}}{0}}{\Sadd{\Ssnd{\Sfst{x}}}{\Ssnd{x}}}{\Smult{\Ssnd{\Sfst{x}}}{\Ssnd{x}}}
\end{equation*}
where we encode \TGT*program~$\mathT{e}$ and \TGT*value~$\mathT{i}$ as $\mathS{x} \coloneq \mathS{\Tup{\Enc TS{e}, \Enc TS{i}}}$.
Note that we match with opcodes from the encoding~$\Enc TS{e}$.

The abstract interpreter~$\Intp*MS<\ABS>$ parameterized by the base value abstraction is then given by (encodings~$\Enc SM{e}$ and~$\Enc SM{v}$ are implicit)
\begin{metcode}
let $\mathM{\Intp*MS<\ABS>(\textS{e}, \textS{i})}$ = let $\mathM{i^\ABS}$ = $\mathM{\eta(\textS{i})}$ in $\mathM{\textM{eval}^\ABS(\textS{e}, i^\ABS)}$
where $\mathM{\textM{eval}^\ABS(\textS{e}, v^\ABS)}$ = match $\textS{e}$ with
  | $\textS{x} \to \mathM{v^\ABS}$ | $\textS{n} \to \eta(\textS{n})$
  | $\mathM{\Sadd{e_1}{e_2} \to \textM{eval}^\ABS(\mathS{e_1}, v^\ABS) \mathbin{+^\ABS} \textM{eval}^\ABS(\mathS{e_2}, v^\ABS)}$
  | $\Smult{e_1}{e_2} \to \mathM{\textM{eval}^\ABS(\mathS{e_1}, v^\ABS) \mathbin{\times^\ABS} \textM{eval}^\ABS(\mathS{e_2}, v^\ABS)}$
  | $\Sequ{e_1}{e_2} \to \mathM{\textM{eval}^\ABS(\mathS{e_1}, v^\ABS) \mathbin{=^\ABS} \textM{eval}^\ABS(\mathS{e_2}, v^\ABS)}$
  | $\Spair{e_1}{e_2} \to \mathM{(\textM{eval}^\ABS(\mathS{e_1}, v^\ABS), \textM{eval}^\ABS(\mathS{e_2}, v^\ABS))}$
  | $\Sfst{e} \to \mathM{\pi_1(\textM{eval}^\ABS(\mathS{e}, v^\ABS))}$ | $\Ssnd{e} \to \mathM{\pi_2(\textM{eval}^\ABS(\mathS{e}, v^\ABS))}$
  | $\Scond{e_p}{e_c}{e_a} \to$ let $\mathM{p^\ABS}$ = $\mathM{\textM{eval}^\ABS(\mathS{e_p}, v^\ABS)}$ in
    $\mathM{\mathcal{F}^\ABS_{\ne 0}(p^\ABS, \textM{eval}^\ABS(\mathS{e_c}, v^\ABS)) \mathbin{\sqcup^\ABS} \mathcal{F}^\ABS_{= 0}(p^\ABS, \textM{eval}^\ABS(\mathS{e_a}, v^\ABS))}$
\end{metcode}
The parametrizable knob is given by the extraction function $\eta$ \citep{NieNieHan99Principles,DarHor19Constructive} that extracts the abstract value from a single concrete value.
Abstract operators are defined correspondingly: abstract operators~$+^\ABS$,~$\times^\ABS$,~$=^\ABS$ correspond to~$+$,~$\times$,~$=$, respectively; $\sqcup^\ABS$~is an abstract join; and~$\mathcal{F}^\ABS_{\ne 0}$ and~$\mathcal{F}^\ABS_{= 0}$ filter abstract values based on the predicate~$\ne 0$ and~$= 0$, respectively \citep{RivYi20Introduction}.
We preserve the tuple structure of the values and use the concrete projections~$\pi_1$ and~$\pi_2$ and concrete tuple constructor~$\mathM{(\cdot, \cdot)}$.
\OmitEnc
\OmitDec

Performing the specialization by hand, we get the retargeted abstract interpreter~$\Intp MT<\ABS> = \Pe[\big]{\Intp MS<\ABS>}{\Enc[\big]SM{\Intp ST}}$ as
\begin{metcode}
let $\mathM{\Intp*MS<\ABS>_{\Intp ST}(\textS{i})}$ =
  let $\mathM{i^\ABS}$ = $\mathM{\eta(\textS{i})}$ in let $\mathM{p^\ABS}$ = $\mathM{\pi_1(\pi_1(i^\ABS)) \mathbin{=^\ABS} \eta(0)}$ in
  $\mathM{\mathcal{F}^\ABS_{\ne 0}(p^\ABS, \pi_2(\pi_1(i^\ABS)) \mathbin{+^\ABS} \pi_2(i^\ABS)) \mathbin{\sqcup^\ABS} \mathcal{F}^\ABS_{= 0}(p^\ABS, \pi_2(\pi_1(i^\ABS)) \mathbin{\times^\ABS} \pi_2(i^\ABS)))}$
\end{metcode}
where the subscript~$\Intp ST$ indicates that the source code~$\Intp MS<\ABS>$ is specialized with respect to the concrete interpreter~$\Intp ST$.

Notice that while the retargeted abstract interpreter~$\Intp*MT<\ABS>$ is written in the meta-language~\MET, the (abstract-)interpretative overhead is eliminated and the structure of the interpreter~$\Intp*ST$ is closely mirrored.
This is a classical phenomenon in partial evaluation \citep{Jon96Introduction}.

Moreover, observe the key advantage of our approach: $\Intp*MT<\ABS>$ reuses the abstract operators~$+^\ABS$ and~$\times^\ABS$ from the existing~$\Intp*MS<\ABS>$ to analyze the \TGT*programs~\Tadd{n} and~\Tmult{n}, resulting in a sound analyzer (\hyperref[thm:correctness]{main theorem}).
Our approach enjoys the inherited properties from the existing (abstract) interpreter, as observed by \citet{Rey72Definitional}.

\section{Future Work}\label{sec:future}
Our recipe for retargeting abstract interpreters for new languages opens several promising research directions.

To evaluate our approach, we are working on an expanded example from \cref{subsec:example} where~\SRC{} and~\TGT{} include loops.

We plan to investigate the following research questions:
\begin{description}
  \item[RQ1] How does a retargeted abstract interpreter compare to manually crafted ones and meta-level analyzers in terms of precision and performance (in terms of time complexity)?
  \item[RQ2] How do semantic gaps between source and target languages (e.g., value domains, higher- vs. first-order) affect the retargeted abstract interpreter?
  \item[RQ3] How does the partial evaluation strategy impact the quality of the retargeted abstract interpreter?
  \item[RQ4] How can this technique be extended to analyze real-world programs with more complex abstractions?
\end{description}

\begin{acks}
  I would like to express my deep gratitude to my advisor, Kwangkeun Yi, for his invaluable guidance throughout this research.
  I am also grateful to Joongwon Ahn and Joonhyup Lee for their helpful discussions and suggestions.
  I thank the reviewers and shepherd for their valuable feedback.
\end{acks}

\bibliographystyle{ACM-Reference-Format}
\bibliography{references}
\end{document}